\newtheorem{theorem}{Theorem}[section]
\newtheorem{lemma}[theorem]{Lemma}
\newtheorem{proposition}[theorem]{Proposition}
\newtheorem{corollary}[theorem]{Corollary}
\theoremstyle{definition}
\newtheorem{example}[theorem]{Example}
\theoremstyle{remark}
\numberwithin{equation}{section}
\begin{document}
\title[$b$-symbol distance distribution of repeated-root cyclic codes]
{$b$-symbol distance distribution of repeated-root cyclic codes}
\author[Mostafanasab and Sengelen Sevim]{Hojjat Mostafanasab and Esra Sengelen Sevim}

\date{}
\subjclass[2010]{}
\keywords{$b$-symbol-pair, distance distribution, cyclic codes.}

\begin{abstract}
Symbol-pair codes, introduced by Cassuto and Blaum \cite{CB}, have been raised for symbol-pair
read channels. This new idea is motivated
by the limitations of the reading process in high-density data storage technologies. 
Yaakobi et al. \cite{YBS} introduced codes for $b$-symbol read channels, where the read operation is
performed as a consecutive sequence of $b>2$ symbols.
In this paper, we come up with a method to compute the $b$-symbol-pair distance of two  $n$-tuples, where $n$ is a positive
integer. Also, we deal with the $b$-symbol-pair distances of some kind of cyclic codes of length $p^e$ over $\mathbb{F}_{p^m}$. 
\end{abstract}

\maketitle

\section{Introduction}

Recently, it is possible to write information on storage devices with high resolution using advances in data storage systems. However, it causes a problem of the gap between write resolution and read resolution. 
 Cassuto and Blaum \cite{CB,CB2} laid out a framework
for combating pair-errors, relating pair-error correction
capability to a new metric called pair-distance. They proposed the model of symbol-pair read channels. Such channels are mainly motivated by magnetic-storage channels with high write resolution, due to physical limitations, each channel contains contributions from two adjacent symbols. Cassuto and Listsyn \cite{CL} studied algebraic construction of cyclic symbol-pair codes. Yaakobi et al. \cite{YBS1} proposed efficient decoding algorithms for the cyclic symbol-pair codes. Chee et al.
\cite{CKW,CJK} established a Singleton-type bound for symbol-pair codes and constructed codes that meet the Singleton-type
bound. Hirotomo et al. \cite{HTM} proposed the decoding algorithm for symbol-pair codes based on the newly defined
parity-check matrix and syndromes.\\
For this new channels, the codes defined as usual over some discrete symbol alphabet, but whose reading from the channel is performed as overlapping pairs of symbols. 
Let $\Xi$ be the alphabet consisting of $q$ elements. Each element in $\Xi$ is called a symbol. We use $\Xi^n$ to denote the set of all 
$n$-tuples, where $n$ is a positive integer. 
In the symbol-pair read channel, there are in fact two channels. If the stored information is ${x}=(x_0,x_1,\dots,x_{n-1})\in \Xi^n$, then the symbol-pair read vector of $x$ is
$$
\pi({x})=[(x_0,x_1),(x_1,x_2),\dots,(x_{n-2},x_{n-1}),(x_{n-1},x_0)],
$$
and the goal is to correct a large number of the so called symbol-pair errors. The
pair distance, $d_p(x, y)$, between two pair-read vectors
$x$ and $y$ is the Hamming distance over the symbol-pair
alphabet $(\Xi \times \Xi)$ between their respective pair-read vectors, that is, $d_p(x,y)=d_H(\pi(x), \pi(y))$. 
The minimum pair distance of a code $\mathcal{C}$ is defined as
$d_p(\mathcal{C})={\rm min}\{d_p(x,y)|x,y\in \mathcal{C}\mbox{ and }x\neq y\}$.
Accordingly, the pair weight of $x$ is $\omega_p(x)= \omega_H(\pi(x))$.
If $\mathcal{C}$ is a linear code, then the minimum pair-distance of $\mathcal{C}$ is the smallest pair-weight of nonzero codewords of $\mathcal{C}$. The minimum pair-distance is one of the important parameters of symbol-pair codes. This distance distribution is very difficult to compute in general, however, for the class of cyclic codes of length $p^e$ over $\mathbb{F}_{p^m}$, their Hamming distance has been completely determined in \cite{HQD}. In \cite{ZSW}, Zhu et al. investigated the symbol-pair
distances of cyclic codes of length $p^e$ over $\mathbb{F}_{p^m}$. 
 
For $b\geq 3$, the $b$-symbol read vector corresponding to the
vector $x = (x_0, x_1, \dots , x_{n-1})\in \Xi^n$ is defined as
$$\pi_b(x)= [(x_0,x_1,\dots , x_{b-1}),(x_1,x_2,\dots , x_{b}),\dots ,(x_{n-1}, x_0, \dots , x_{b-2})]\in (\Xi^b)^n.$$
We refer to the elements of $\pi_b(x)$ as $b$-symbols. The $b$-symbol
distance between $x$ and $y$, denoted by $d_b(x, y)$, is defined as
$d_b(x, y) = d_H(\pi_b(x), \pi_b(y))$.
Similarly, we define the $b$-weight of the vector $x$ as
$\omega_H(\pi_b(x))$.
In the analogy of the definition of symbol-pair codes, 
the minimum $b$-symbol distance
of $\mathcal{C}$, $d_b(\mathcal{C})$, is given by $d_b(\mathcal{C})={\rm min}\{d_b({x},{y})|{x},{y}\in \mathcal{C}\mbox{ and }{x}\neq {y}\}$. For more information on these notions see \cite{YBS}. 

We can rewrite \cite[Proposition 9]{YBS} for any arbitrary alphabet $\Xi$.
\begin{proposition}\label{propos}
Let $x\in\Xi^n$ be such that $0<\omega_H(x)\leq n-(b-1)$. Then
$$\omega_H(\mathcal{C})+b-1\leq \omega_b(\mathcal{C})\leq b\cdot \omega_H(\mathcal{C}).$$
\end{proposition}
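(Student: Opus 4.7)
\medskip
\noindent\textbf{Proof proposal.} My plan is to treat both inequalities separately by working directly with the cyclic $b$-window description of $\pi_b(x)$. Write $\pi_b(x)=[W_0,W_1,\dots,W_{n-1}]$, where $W_i=(x_i,x_{i+1},\dots,x_{i+b-1})$ with indices read modulo $n$. Then $\omega_b(x)$ is the number of indices $i\in\{0,1,\dots,n-1\}$ for which $W_i\neq 0$, equivalently $n$ minus the number of all-zero windows.

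For the upper bound $\omega_b(x)\le b\cdot\omega_H(x)$, I would use a straightforward double-counting argument. Every nonzero coordinate $x_i$ lies in exactly $b$ of the windows, namely $W_{i-b+1},W_{i-b+2},\dots,W_i$ (indices mod $n$). Summing over the $\omega_H(x)$ nonzero coordinates, the total number of incidences between nonzero coordinates and windows is exactly $b\cdot \omega_H(x)$. Since each nonzero window contributes at least one such incidence, the number of nonzero windows is at most $b\cdot\omega_H(x)$.

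For the lower bound $\omega_b(x)\ge \omega_H(x)+b-1$, I would parametrize zero windows by the cyclic runs of zeros in $x$. Let $z_1,z_2,\dots,z_k$ be the lengths of the maximal cyclic runs of zeros in $x$, so $\sum_{j=1}^k z_j=n-\omega_H(x)$. A window $W_i$ is all zero precisely when it sits inside a single zero run, and a zero run of length $z_j$ contains exactly $\max(z_j-b+1,0)$ such windows. Hence
\[
\omega_b(x)=n-\sum_{j=1}^k \max(z_j-b+1,0)=\omega_H(x)+\sum_{j=1}^k\bigl(z_j-\max(z_j-b+1,0)\bigr)=\omega_H(x)+\sum_{j=1}^k\min(z_j,b-1),
\]
using the elementary identity $z-\max(z-b+1,0)=\min(z,b-1)$. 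It remains to show $\sum_j\min(z_j,b-1)\ge b-1$. The hypothesis $\omega_H(x)\le n-(b-1)$ gives $\sum_j z_j\ge b-1$, and I split into two cases: if some $z_j\ge b-1$ then that single term already contributes $b-1$; otherwise every $z_j\le b-2$, so $\min(z_j,b-1)=z_j$ and the total sum is $\sum_j z_j\ge b-1$. The hypothesis $\omega_H(x)>0$ is used implicitly to guarantee that at least one cyclic zero-run exists (otherwise there is no partitioning into runs and the formula degenerates).

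The step I expect to require the most care is the bookkeeping in the lower bound, in particular verifying that the formula for the number of all-zero windows in a zero-run of length $z_j$ is indeed $\max(z_j-b+1,0)$ in the cyclic setting, and checking that the condition $\omega_H(x)\le n-(b-1)$ is exactly what prevents the pathological case where no zero-run of nontrivial length exists. The rest is essentially a case analysis on the relative sizes of the $z_j$'s and $b-1$, together with the double-counting argument for the upper bound.
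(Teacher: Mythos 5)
Your proof is correct, and it is worth noting that the paper itself gives no proof of this proposition at all: it is simply imported from \cite[Proposition 9]{YBS}, so there is no argument in the text to compare against line by line. Your two halves are both sound. The double-counting for the upper bound (each nonzero coordinate lies in exactly $b$ windows, each nonzero window absorbs at least one incidence) is clean and does not even use the hypotheses. For the lower bound, your exact formula $\omega_b(x)=\omega_H(x)+\sum_j\min(z_j,b-1)$ over the maximal cyclic zero-runs is in fact equivalent to the weight formula the paper only establishes later, in Theorem \ref{main1} and Corollary \ref{maincoro}: there $e+L(b-1)=\sum_{z_j\le b-2}z_j+(b-1)\cdot\#\{j: z_j\ge b-1\}=\sum_j\min(z_j,b-1)$. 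So your route both proves the proposition and anticipates the paper's main counting theorem in a cleaner closed form. One small correction to your commentary: the hypothesis $\omega_H(x)>0$ does not guarantee that a zero-run exists (that is the job of $\omega_H(x)\le n-(b-1)$ with $b\ge 2$, which forces at least $b-1\ge 1$ zeros); rather, $\omega_H(x)>0$ guarantees the zero set is a proper subset of the cycle, so that it decomposes into genuine maximal arcs separated by nonzero coordinates, which is what you need to assert that every all-zero window lies inside a single run and that the count $\max(z_j-b+1,0)$ applies. With that attribution fixed, the argument is complete. (You also silently correct the paper's typographical slip of stating the conclusion for $\mathcal{C}$ rather than for $x$, which is the intended reading.)
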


 Referring to Proposition \ref{propos}, we see that:
\begin{corollary}\label{lemm}
Let $\mathcal{C}$ be a code. If $0<d_H(\mathcal{C})\leq n-(b-1)$, then
$$d_H(\mathcal{C})+b-1\leq d_b(\mathcal{C})\leq b\cdot d_H(\mathcal{C}).$$
\end{corollary}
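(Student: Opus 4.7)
The plan is to reduce the distance statement to the weight statement of Proposition \ref{propos} by introducing, for any two distinct codewords $x,y\in\mathcal{C}$, an auxiliary indicator vector $z\in\Xi^n$ whose support is $S=\{k:x_k\neq y_k\}$. The key observation is that $\omega_H(z)=|S|=d_H(x,y)$, while the $i$-th entry of $\pi_b(x)$ differs from that of $\pi_b(y)$ precisely when the cyclic $b$-window based at $i$ meets $S$, i.e.\ precisely when the corresponding $b$-window of $z$ is nonzero; hence $\omega_b(z)=d_b(x,y)$. This identification lets us transport every bound from Proposition \ref{propos} applied to $z$ into a bound on $d_b(x,y)$ in terms of $d_H(x,y)$.

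For the upper bound, I would choose $(x_0,y_0)\in\mathcal{C}^2$ with $d_H(x_0,y_0)=d_H(\mathcal{C})$ and form the associated indicator $z_0$. The hypothesis $0<d_H(\mathcal{C})\leq n-(b-1)$ puts $\omega_H(z_0)$ exactly in the range to which Proposition \ref{propos} applies, so $\omega_b(z_0)\leq b\cdot\omega_H(z_0)=b\cdot d_H(\mathcal{C})$. Consequently $d_b(\mathcal{C})\leq d_b(x_0,y_0)=\omega_b(z_0)\leq b\cdot d_H(\mathcal{C})$.

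For the lower bound, I would pick $(x_1,y_1)$ achieving $d_b(x_1,y_1)=d_b(\mathcal{C})$ and form $z_1$. If $\omega_H(z_1)\leq n-(b-1)$, Proposition \ref{propos} gives $d_b(\mathcal{C})=\omega_b(z_1)\geq\omega_H(z_1)+b-1\geq d_H(\mathcal{C})+b-1$, and we are done. Otherwise $\omega_H(z_1)\geq n-b+2$, so $z_1$ cannot contain any run of $b$ consecutive zeros (cyclically); hence every $b$-window of $z_1$ is nonzero and $d_b(\mathcal{C})=\omega_b(z_1)=n$, while the hypothesis $d_H(\mathcal{C})\leq n-(b-1)$ yields $d_H(\mathcal{C})+b-1\leq n$, closing this case.

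The only real subtlety, and the one worth highlighting, is this second case: the lower bound in Proposition \ref{propos} is not valid once $\omega_H$ exceeds $n-(b-1)$, so one cannot simply quote the proposition for the pair realizing the minimum $b$-distance. The saving grace is exactly the upper hypothesis on $d_H(\mathcal{C})$, which forces $n\geq d_H(\mathcal{C})+b-1$ and lets the trivial bound $\omega_b(z_1)\leq n$ be sharp enough. Apart from this small dichotomy, the argument is a direct translation of weight inequalities into distance inequalities through the indicator $z$.
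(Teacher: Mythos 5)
Your proof is correct and takes essentially the same route as the paper, which gives no written argument beyond ``Referring to Proposition \ref{propos}, we see that:''. Your reduction via the indicator vector $z$ (so that $\omega_H(z)=d_H(x,y)$ and $\omega_b(z)=d_b(x,y)$) is the intended mechanism, and your separate treatment of the case $\omega_H(z_1)>n-(b-1)$ --- where the proposition's lower bound cannot be quoted and one must instead observe that every $b$-window meets the support, giving $\omega_b(z_1)=n\geq d_H(\mathcal{C})+b-1$ --- supplies exactly the detail the paper leaves implicit.
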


In the next section we give a method to calculate the $b$-symbol distance of two $n$-tuples.
We know that all cyclic codes of length $p^e$ over a finite field of characteristic $p$ are generated by a
single ``monomial'' of the form $(x-1)^i$, where $0\leq i\leq p^e$ (see \cite{HQD}). Determining
the $b$-symbol-pair distances of some kind of these cyclic codes is the main purpose of the next section. 

\section{Main results}

In the following theorem we give a formula to calculate the $b$-symbol distance of two $n$-tuples.
\begin{theorem}\label{main1}
Let $x=(x_1,x_2,\dots,x_n)$ and $y=(y_1,y_2,\dots,y_n)$ be two vectors in $\Xi^n$ with $0<d_H(x,y)\leq n-(b-1)$. Suppose that 

$$A=\{1,2,\dots,n\}\backslash\{r,r+1,r+2,\dots,s\mid r,s \mbox{ are}
\mbox{ such that }s-r\geq b-2 \mbox{ and } x_i=y_i$$ 
$$\mbox{ for each }  r\leq i\leq s \mbox{ and indices may wrap around modulo~} n\},$$
and $A=\cup_{l=1}^{L}B_l$ is a minimal partition of the set $A$ to subsets of consecutive
indices $($every subset $B_l=[s_l,e_l]$ is the sequence of all indices
between $s_l$ and $e_l$, inclusive, and is the smallest integer that
achieves such partition, also indices may wrap around modulo $n$$)$. Then
$$d_b(x,y)=d_H(x,y)+e+L(b-1),$$
where $e=|\{i\mid i\in B_l\mbox{ for some } 1\leq l\leq L \mbox{ such that } x_i=y_i\}|$.
\end{theorem}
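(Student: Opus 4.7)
The plan is to compute $d_b(x,y)$ by counting its complement: the number of starting indices $i$ at which the two $b$-windows $(x_i,\dots,x_{i+b-1})$ and $(y_i,\dots,y_{i+b-1})$ agree componentwise (all indices mod $n$). Writing $M=\{i:x_i=y_i\}$, such an $i$ satisfies $\{i,i+1,\dots,i+b-1\}\subseteq M$, which means $i$ lies inside a maximal cyclic run of agreements of length at least $b$. A maximal agreement run of length $k$ therefore contributes exactly $\max(0,k-(b-1))$ such indices. Consequently, if the maximal agreement runs have lengths $k_1,\dots,k_t$, then
$$n-d_b(x,y)=\sum_{j=1}^{t}\max\bigl(0,k_j-(b-1)\bigr).$$

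Next, I would split the agreement runs into \emph{long} runs (those with $k_j\ge b-1$) and \emph{short} runs (those with $k_j<b-1$). Short runs contribute $0$ to the above sum, while a long run of length $k_j$ contributes $k_j-(b-1)$. Comparing with the definition of $A$ in the statement, the set removed from $\{1,\dots,n\}$ to form $A$ is precisely the union of all long runs; hence $|A|=d_H(x,y)+e$, where $e$ counts the positions of $A$ lying in short agreement runs. Letting $L'$ denote the number of long runs, the sum above equals $\sum_{j\text{ long}}k_j - L'(b-1)=(n-|A|)-L'(b-1)$.

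It then remains to show that $L=L'$, i.e.\ that the minimum number of blocks of consecutive (cyclic) indices needed to partition $A$ equals the number of long runs. Since the long runs are pairwise disjoint cyclic intervals separated by at least one disagreement position, deleting them from the cycle $\{1,\dots,n\}$ produces exactly $L'$ maximal arcs, and each such arc is one of the blocks $B_l=[s_l,e_l]$; no coarser partition is possible because a cyclic interval of $A$ cannot cross any removed long run. This identifies $L$ with $L'$.

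Combining the previous two steps,
$$n-d_b(x,y)=(n-|A|)-L(b-1)=\bigl(n-d_H(x,y)-e\bigr)-L(b-1),$$
which rearranges to the claimed formula $d_b(x,y)=d_H(x,y)+e+L(b-1)$. The main technical point to be careful about is step (c): verifying that the minimal cyclic partition of $A$ really is obtained by cutting at exactly the long runs, which requires the observation that a long run of length exactly $b-1$ is still removed from $A$ even though it contributes no all-agreement window, so the accounting of $L(b-1)$ matches the accounting of $|A|$ precisely. The hypothesis $d_H(x,y)\le n-(b-1)$ is what guarantees at least one long run exists so that the cyclic partition argument is non-vacuous.
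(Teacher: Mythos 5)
Your argument is essentially the paper's proof read from the complementary side. The paper counts the differing $b$-windows directly, as the $|B_l|+b-1$ starting positions $s_l-b+1,\dots,e_l$ attached to each block $B_l$ (these sets being disjoint across $l$ because consecutive blocks are separated by an agreement run of length at least $b-1$), whereas you count the agreeing windows via the maximal agreement runs and subtract from $n$. The two computations are equivalent, and your steps establishing $n-d_b(x,y)=\sum_j\max(0,k_j-(b-1))$, the identity $|A|=d_H(x,y)+e$, and $L=L'$ when $L'\geq 1$ are all correct; your write-up is, if anything, more explicit than the paper's about why every differing window is counted exactly once.

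The one genuine flaw is your closing sentence: the hypothesis $d_H(x,y)\leq n-(b-1)$ guarantees at least $b-1$ agreement positions, but not that $b-1$ of them are \emph{consecutive}, so it does not guarantee that any long run exists. Take $n=15$, $b=4$, $y=0$ and $x$ with nonzero entries exactly at positions $1,4,7,10,13$: then $d_H=5\leq 12$, but every maximal agreement run has length $2<b-1$, so nothing is removed, $A=\{1,\dots,15\}$, $L=1$ while $L'=0$, and the identification $L=L'$ breaks down. In this degenerate case the stated formula itself fails: every $4$-window meets a disagreement, so $d_b=15$, while $d_H+e+L(b-1)=5+10+3=18$. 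So the defect lies in the theorem as stated as much as in your proof --- the paper's own argument silently assumes $L'\geq 1$ when it lists the $b$-tuple subsets attached to the blocks. If one adds the hypothesis that at least one cyclic agreement run of length at least $b-1$ exists (equivalently, that the set removed from $\{1,\dots,n\}$ to form $A$ is nonempty), your argument is complete and correct.
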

\begin{proof}
Since the partition is minimal, there are no two indices $i,i+j$, where $j\in\{1,\dots,b-1\}$, that belong
to different subsets $B_l,B_{l^{\prime}}$.
The $b$-symbol distance between $x$ and $y$ is equal to the sum of the sizes of the $b$-tuple subsets
{\scriptsize
$$\{(s_l-b+1,s_l-b+2,\dots,s_l),(s_{l}-b+2,s_l-b+3,\dots,s_l,s_l+1),\dots,(s_l,s_l+1,\dots,s_l+b-1),$$
$$(s_l+1,s_l+2,\dots,s_l+b),\dots,(e_l,e_l+1,\dots,e_l+b-1)\}.$$}
The number of $b$-tuples in each $b$-tuple subset equals $|B_l|+b-1$, whence
$d_b(x,y)=\sum_{l=1}^{L}B_l+L(b-1)$. Furthermore, it is easy to see that $\sum_{l=1}^{L}B_l=d_H(x,y)+e$
where $e=|\{i\mid i\in B_l\mbox{ for some } 1\leq l\leq L \mbox{ such that } x_i=y_i\}|$.
\end{proof}

\begin{corollary}\label{maincoro}
Let $x=(x_1,x_2,\dots,x_n)\in\Xi^n$ with $0<\omega_H(x)\leq n-(b-1)$. Suppose that
$$A=\{1,2,\dots,n\}\backslash\{r,r+1,r+2,\dots,s\mid r,s \mbox{ are}
\mbox{ such that }s-r\geq b-2 \mbox{ and } x_i=0$$ 
$$\mbox{ for each }  r\leq i\leq s \mbox{ and indices may wrap around modulo~} n\},$$
and $A=\cup_{l=1}^{L}B_l$ is a minimal partition of the set $A$ to subsets of consecutive
indices $($every subset $B_l=[s_l,e_l]$ is the sequence of all indices
between $s_l$ and $e_l$, inclusive, and is the smallest integer that
achieves such partition, also indices may wrap around modulo $n$$)$. Then $\omega_b(x)=\omega_H(x)+e+L(b-1),$
where $$e=|\{i\mid i\in B_l\mbox{ for some } 1\leq l\leq L \mbox{ such that } x_i=0\}|.$$
\end{corollary}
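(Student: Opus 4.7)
The plan is to obtain Corollary \ref{maincoro} as a direct specialization of Theorem \ref{main1} by taking $y$ to be the zero vector $\mathbf{0}=(0,0,\dots,0)\in\Xi^n$. Under this choice, the Hamming distance reduces to the Hamming weight, $d_H(x,\mathbf{0})=\omega_H(x)$, and similarly the $b$-symbol distance reduces to the $b$-weight, $d_b(x,\mathbf{0})=\omega_b(x)$, since $\pi_b(\mathbf{0})$ is the all-zero vector in $(\Xi^b)^n$. Moreover, the condition ``$x_i=y_i$'' appearing in the definition of the set $A$ in Theorem \ref{main1} becomes exactly the condition ``$x_i=0$'' used in the statement of the corollary.

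First I would verify that the hypothesis of Theorem \ref{main1} is met: the assumption $0<\omega_H(x)\leq n-(b-1)$ of the corollary translates directly into $0<d_H(x,\mathbf{0})\leq n-(b-1)$, so the theorem applies to the pair $(x,\mathbf{0})$. Next, I would observe that the set $A$ defined in the corollary coincides with the set $A$ that the theorem associates to $(x,\mathbf{0})$, since the excluded ``runs'' of length at least $b-1$ on which $x_i=y_i$ for all indices in the run become precisely the runs on which $x_i=0$. Consequently the minimal partition $A=\bigcup_{l=1}^{L} B_l$ and the integer $e$ are identical in the two statements.

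Finally I would invoke Theorem \ref{main1} to conclude
\[
\omega_b(x)=d_b(x,\mathbf{0})=d_H(x,\mathbf{0})+e+L(b-1)=\omega_H(x)+e+L(b-1),
\]
which is the claimed identity. There is no serious obstacle here; the only thing worth checking carefully is the bookkeeping, namely that the wrap-around convention and the ``smallest integer that achieves such partition'' clause behave the same way when $y$ is specialized to $\mathbf{0}$, which follows immediately because both clauses depend on $x$ and $y$ only through the indicator of the set $\{i : x_i = y_i\}$, which here equals $\{i : x_i = 0\}$.
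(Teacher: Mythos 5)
Your proposal is correct and matches the paper's intent: the paper states this as an immediate corollary of Theorem \ref{main1} (giving no separate proof), and the evident derivation is precisely your specialization $y=\mathbf{0}$, under which $d_H(x,\mathbf{0})=\omega_H(x)$, $d_b(x,\mathbf{0})=\omega_b(x)$, and the set $A$, the partition $\{B_l\}$, and the quantity $e$ all coincide with those in the corollary.
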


\begin{example}
Let $n=15$, $b=4$ and $x=(0,0,1,3,0,5,0,0,0,2,0,7,0,0,0)\in\mathbb{Z}^{15}$.
We list all of the $4$-tuples as follows:
$$(0,0,1,3),(0,1,3,0),(1,3,0,5),(3,0,5,0),(0,5,0,0),(5,0,0,0),(0,0,0,2),$$
$$(0,0,2,0),(0,2,0,7),(2,0,7,0),(0,7,0,0),(7,0,0,0),(0,0,0,0),(0,0,0,1).$$
Hence $\omega_4(x)=13$. On the other hand, $\omega_H(x)=5$, $e=2$ and $L=2$. 
Therefore, the equation $\omega_b(x)=\omega_H(x)+e+L(b-1)$ holds.
\end{example}

\begin{theorem}\label{mainthm} \rm{(\cite[Theorem 6.4]{HQD})} 
Let $\mathcal{C}$ be a cyclic code of length $p^e$ over $\mathbb{F}_{p^m}$. 
Then $\mathcal{C}=\langle (x-1)^i\rangle\subseteq \frac {\mathbb{F}_{p^m}[x]}{\langle x^{p^e}-1\rangle}$, for $i\in\{0,1,\dots,p^e\}$. 
The Hamming distance $d_H(\mathcal{C})$ is determined by
{\small
\begin{equation*}d_H(\mathcal{C})=
\begin{cases}1& if~i=0,\\
\beta+2 & if~\beta p^{e-1}+1\leq i\leq (\beta +1)p^{e-1}~where~0\leq \beta \leq p-2,\\
(t+1)p^k& if~p^e-p^{e-k}+(t-1)p^{e-k-1}+1\leq i \leq p^e-p^{e-k}+tp^{e-k-1},\\
& where~1\leq t \leq p-1,~and~1\leq k \leq e-1,\\
0 & if~i=p^e.
\end{cases}
\end{equation*}}
\end{theorem}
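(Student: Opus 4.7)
The plan is a two-sided squeeze: exhibit an explicit low-weight codeword in $\mathcal{C}$ (upper bound), and rule out anything shorter by analyzing the divisibility condition $(x-1)^i\mid f(x)$ (lower bound). The degenerate cases $i=0$ (where $\mathcal{C}$ is the full ambient algebra, forcing $d_H=1$) and $i=p^e$ (where $\mathcal{C}=\{0\}$) are immediate, so I restrict to $1\leq i\leq p^e-1$.

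The key structural identity is $(x-1)^{p^k}=x^{p^k}-1$ in characteristic $p$, which implies
$$(x-1)^{ap^k}=(x^{p^k}-1)^a=\sum_{j=0}^{a}\binom{a}{j}(-1)^{a-j}x^{jp^k}.$$
Whenever $0\leq a\leq p-1$, none of the $\binom{a}{j}$ vanish modulo $p$, so the right-hand side has Hamming weight exactly $a+1$, and its support lies on an arithmetic progression of common difference $p^k$. This is the workhorse of both bounds.

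For the upper bound in the first regime $\beta p^{e-1}+1\leq i\leq(\beta+1)p^{e-1}$ with $0\leq\beta\leq p-2$, the element $(x-1)^{(\beta+1)p^{e-1}}=(x^{p^{e-1}}-1)^{\beta+1}$ lies in $\mathcal{C}$ (since $(\beta+1)p^{e-1}\geq i$) and has weight exactly $\beta+2$ by the identity above. For the second regime indexed by $(k,t)$, I would set $j_{\min}=p^e-p^{e-k}+tp^{e-k-1}$, expand this exponent in base $p$, and iteratively apply the identity to rewrite $(x-1)^{j_{\min}}$ as a product of factors of the form $(x^{p^\ell}-1)^{a_\ell}$. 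Because the distinct $p^\ell$ place the monomials of different factors on disjoint residue classes modulo the appropriate power of $p$, the weights multiply out cleanly to $(t+1)p^k$, and $j_{\min}\geq i$ puts the resulting polynomial in $\mathcal{C}$.

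The matching lower bound is where I expect the bulk of the difficulty. Given a nonzero $f\in\mathcal{C}$ with $\deg f<p^e$, one has $(x-1)^i\mid f$ in $\mathbb{F}_{p^m}[x]$, so $1$ is a root of $f$ of multiplicity at least $i$. I would establish, by induction on $k$, the dual statement: a polynomial of Hamming weight $w$ below the claimed threshold cannot have $1$ as a root with multiplicity $\geq i$. The induction step partitions the support of $f$ modulo $p^k$, uses the Frobenius relation $g(x)^{p^k}=g(x^{p^k})$ to reduce an instance indexed by $(k,t)$ to several smaller instances, and then invokes a Hasse-derivative style computation to count multiplicity from the coefficient pattern. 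The hard part of the bookkeeping is matching the piecewise formula exactly at the two interfaces $i=(p-1)p^{e-1}$ (between the two regimes) and $i=p^e-p^{e-k}$ (between consecutive $k$-blocks); once these endpoint values are shown to align and the minimum weight is verified to be nondecreasing in $i$, monotonicity fills in the interior of each block.
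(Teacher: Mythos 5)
First, a point of reference: the paper does not prove this statement at all --- it is imported verbatim as Theorem 6.4 of \cite{HQD}, so there is no in-paper proof to compare your attempt against. Judged on its own merits, your proposal is half a proof, and it is the easy half that is complete. The upper-bound constructions are correct: $(x-1)^{p^{\ell}}=x^{p^{\ell}}-1$ in characteristic $p$, the nonvanishing of $\binom{a}{j}\bmod p$ for $0\le a\le p-1$ (and, for the factor $(x^{p^{e-k}}-1)^{p^k-1}$, for $a=p^k-1$ by Lucas), and the disjointness of supports coming from the base-$p$ expansion of $p^e-p^{e-k}+tp^{e-k-1}$ do give codewords of weight $\beta+2$ and $(t+1)p^k$ at the top of each block. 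Your squeeze architecture is also sound: since $\mathcal{C}_{i+1}\subseteq\mathcal{C}_i$, the minimum distance is nondecreasing in $i$, so a matching lower bound at the bottom index of each block and the upper bound at the top index together determine the whole block.

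The genuine gap is that the lower bound --- which is the entire content of the theorem --- is never proved; it is only announced. Phrases such as ``I would establish, by induction on $k$, the dual statement'' and ``invokes a Hasse-derivative style computation'' name a strategy without executing any of it: you state no precise induction hypothesis, no base case, and no argument for why the multiplicity condition $D^{(\ell)}f(1)=0$ for $0\le \ell<i$ forces weight at least $(t+1)p^k$ rather than some weaker threshold. This is exactly where the difficulty of the theorem is concentrated (the interaction between Lucas-type vanishing of $\binom{n_j}{\ell}$ and the support of $f$ is delicate, and the two interface cases you flag are not corner cases but the crux). As written, your proposal establishes only that $d_H(\mathcal{C}_i)$ is at most the claimed value; to complete it you would need to carry out the induction in full, essentially reproducing the argument of \cite{HQD}.
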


From now on, in order to simplify the notation, for $i\in\{0,1,\dots,p^e\}$, we denote each code
$\langle (x-1)^i\rangle$ by $\mathcal{C}_i$.

\begin{proposition}\label{firstthm}
If $b\leq p^e$, then $d_b(\mathcal{C}_{0})=b$.
\end{proposition}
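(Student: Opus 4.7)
The statement follows almost immediately from the material already assembled in the excerpt, so my plan is to use the sandwich bound in Corollary~\ref{lemm}.

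First I would observe that $\mathcal{C}_0=\langle (x-1)^0\rangle=\langle 1\rangle$ is the whole ambient ring, hence every vector of $\mathbb{F}_{p^m}^{p^e}$ is a codeword. In particular, setting $i=0$ in Theorem~\ref{mainthm} gives $d_H(\mathcal{C}_0)=1$, as witnessed by any unit vector.

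Second I would check the hypothesis of Corollary~\ref{lemm}. The length here is $n=p^e$, and the assumption $b\leq p^e$ is equivalent to $1\leq n-(b-1)$, so the inequality $0<d_H(\mathcal{C}_0)=1\leq n-(b-1)$ holds and the corollary applies to give
\[
b \;=\; d_H(\mathcal{C}_0)+b-1 \;\leq\; d_b(\mathcal{C}_0) \;\leq\; b\cdot d_H(\mathcal{C}_0) \;=\; b.
\]
Both ends coincide, forcing $d_b(\mathcal{C}_0)=b$.

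There is really no obstacle: the only thing worth double-checking is that the extremal weight-$1$ codeword actually achieves $\omega_b=b$, which I would verify as a sanity check by applying Corollary~\ref{maincoro} to $x=(1,0,\dots,0)$. For such $x$ one has $\omega_H(x)=1$, the set $A$ is a single cyclic interval of length $b$ containing the unique nonzero position together with the $b-1$ adjacent zeros that still sit inside some $b$-window through position $0$, so $L=1$ and $e=b-1$, giving $\omega_b(x)=1+(b-1)+1\cdot(b-1)=b$ in direct agreement with the sandwich argument. If a referee prefers a self-contained argument, this explicit computation could replace the appeal to Corollary~\ref{lemm} entirely.
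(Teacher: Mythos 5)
Your main argument is exactly the paper's proof: read off $d_H(\mathcal{C}_0)=1$ from Theorem~\ref{mainthm} and squeeze $d_b(\mathcal{C}_0)$ between $d_H(\mathcal{C}_0)+b-1=b$ and $b\cdot d_H(\mathcal{C}_0)=b$ via Corollary~\ref{lemm}; this part is correct and complete. However, the optional ``sanity check'' you append is wrong as written: for $x=(1,0,\dots,0)$ the zeros form a single cyclic run of length $p^e-1\geq b-1$, so the whole run is deleted when forming $A$, leaving $A=\{1\}$, $L=1$ and $e=0$, which gives $\omega_b(x)=1+0+(b-1)=b$; your version keeps $b-1$ zeros inside $A$ and takes $e=b-1$, and the resulting expression $1+(b-1)+1\cdot(b-1)$ equals $2b-1$, not $b$. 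Since you suggest that computation could replace the appeal to Corollary~\ref{lemm}, it would need to be corrected before it could serve that role.
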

\begin{proof}
By Theorem \ref{mainthm}, we have that $d_H(\mathcal{C}_{0})=1$. 
So, by Corollary \ref{lemm}, $b\geq d_b(\mathcal{C}_{0})\geq d_H(\mathcal{C}_{0})+b-1=b$. Hence $d_b(\mathcal{C}_{0})= b$.
\end{proof}

\begin{proposition}\label{prop1}
Let $b<p^e$. Then
$b+1\leq d_b(\mathcal{C}_i)\leq 2b$ for every $1\leq i\leq p^{e-1}$.
\end{proposition}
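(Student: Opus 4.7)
The proof should be a quick consequence of Theorem~\ref{mainthm} combined with Corollary~\ref{lemm}. First I would identify the range $1\leq i\leq p^{e-1}$ with the case $\beta=0$ in the second clause of Theorem~\ref{mainthm}: this gives $d_H(\mathcal{C}_i)=\beta+2=2$ for every such $i$. So the Hamming distance is pinned down exactly, not just bounded.

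Next I would verify the hypothesis $0<d_H(\mathcal{C}_i)\leq n-(b-1)$ required by Corollary~\ref{lemm}. Here $n=p^e$ and $d_H(\mathcal{C}_i)=2$, so the condition reduces to $2\leq p^e-b+1$, i.e.\ $b\leq p^e-1$, which is exactly the assumption $b<p^e$ of the proposition. With the hypothesis validated, I would apply Corollary~\ref{lemm} directly:
\[
d_H(\mathcal{C}_i)+b-1\;\leq\;d_b(\mathcal{C}_i)\;\leq\;b\cdot d_H(\mathcal{C}_i),
\]
and substitute $d_H(\mathcal{C}_i)=2$ to obtain $b+1\leq d_b(\mathcal{C}_i)\leq 2b$, which is the claim.

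There is no real obstacle here; the work has already been done in Theorem~\ref{mainthm} and Corollary~\ref{lemm}. The only thing one should be careful about is making sure the index range $1\leq i\leq p^{e-1}$ is read off correctly from the piecewise formula (it corresponds solely to $\beta=0$ and not to any of the later branches involving $t$ and $k$), and that the side condition on the Hamming distance for Corollary~\ref{lemm} is checked before invoking it.
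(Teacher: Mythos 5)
Your proof is correct and follows exactly the paper's own argument: read off $d_H(\mathcal{C}_i)=2$ from the $\beta=0$ branch of Theorem~\ref{mainthm} and then apply Corollary~\ref{lemm}. Your explicit check that $b<p^e$ is precisely the hypothesis $d_H(\mathcal{C}_i)\leq n-(b-1)$ needed to invoke the corollary is a detail the paper leaves implicit, and it is worth including.
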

\begin{proof}
By Theorem \ref{mainthm}, $d_H(\mathcal{C}_i)=2$ for every $1\leq i\leq p^{e-1}$. Hence,
$2b\geq d_b(\mathcal{C}_i)\geq 2+(b-1)=b+1$, by Corollary \ref{lemm}.
\end{proof}

Notice that, for two codes $\mathcal{C}, \mathcal{C}^\prime\subseteq \mathbb{F}_{p^m}^{p^e}$ with $\mathcal{C} \subseteq \mathcal{C}^\prime$, we have $d_b(\mathcal{C})\geq d_b(\mathcal{C}^\prime)$. 
We define $d_b(\mathcal{C}_{p^e})=0$.

%\begin{theorem}
%Let $p^e=b$. Then $d_b(\mathcal{C}_{b-1})=b$.
%\end{theorem}
%\begin{proof}
%Notice that Theorem \ref{mainthm} implies $d_H(\mathcal{C}_{b-1})=b$. 
%So, by Corollary \ref{maincoro}, $d_b(\mathcal{C}_{b-1})=\omega_b(c)\geq\omega_H(c)$ for some $c\in\mathcal{C}_{b-1}$.
%Hence, $d_b(\mathcal{C}_{b-1})\geq d_H(\mathcal{C}_{b-1})=b$. Consequently $d_b(\mathcal{C}_{b-1})= b$.
%\end{proof}

\begin{proposition}\label{m1}
Let $b\leq p$ and $e=1$.
Then $d_b(\mathcal{C}_i)=
i+b$ for each $0\leq i \leq p-b$.
\end{proposition}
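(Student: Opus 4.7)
The plan is to sandwich $d_b(\mathcal{C}_i)$ between matching bounds, using Corollary~\ref{lemm} for the lower bound and an explicit codeword for the upper bound. First, the case $i=0$ is already handled by Proposition~\ref{firstthm}, which gives $d_b(\mathcal{C}_0)=b=0+b$. For $1\leq i\leq p-b$, specializing Theorem~\ref{mainthm} to $e=1$ forces $\beta=i-1$ and yields $d_H(\mathcal{C}_i)=i+1$. Since $i+1\leq p-b+1=p-(b-1)$, the hypothesis of Corollary~\ref{lemm} is satisfied, giving the lower bound $d_b(\mathcal{C}_i)\geq d_H(\mathcal{C}_i)+(b-1)=i+b$.

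For the matching upper bound I would analyze the generator itself, namely $c=(x-1)^i$. Expanding via the binomial theorem, the coefficient of $x^j$ in $c$ is $(-1)^{i-j}\binom{i}{j}$ for $0\leq j\leq i$ and $0$ for $i+1\leq j\leq p-1$. Because $i<p$, none of the coefficients $\binom{i}{j}$ with $0\leq j\leq i$ is divisible by $p$, so the nonzero positions of the codeword form the consecutive block $\{0,1,\dots,i\}$ of length $i+1$, while the zero positions form the consecutive block $\{i+1,i+2,\dots,p-1\}$ of length $p-1-i$. The assumption $i\leq p-b$ gives $p-1-i\geq b-1$, which is precisely the threshold $s-r\geq b-2$ needed for this zero run to be stripped out by the procedure defining the set $A$ in Corollary~\ref{maincoro}.

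I would then apply Corollary~\ref{maincoro} directly: the long zero block is removed, leaving $A=\{0,1,\dots,i\}$ as a single consecutive run, so $L=1$; and since every surviving index corresponds to a nonzero entry, $e=0$. Thus $\omega_b(c)=\omega_H(c)+e+L(b-1)=(i+1)+0+(b-1)=i+b$, which forces $d_b(\mathcal{C}_i)\leq i+b$ and closes the equality.

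I do not anticipate a serious obstacle here; the whole argument is essentially a careful bookkeeping application of Corollary~\ref{maincoro} to the explicit generator. The two checks that deserve attention are (i) that the bound $i\leq p-b$ is used in exactly the right place, namely to guarantee the zero block is long enough to be deleted from $A$, and (ii) that the cyclic wraparound does not accidentally extend the zero block, which is ruled out by the observation that $c_0=(-1)^i\neq 0$ in $\mathbb{F}_{p^m}$. If the cleanest exposition calls for it, I would present the computation of $\omega_b((x-1)^i)$ as a short lemma so that the same generator can be reused in later propositions that push $i$ beyond $p-b$.
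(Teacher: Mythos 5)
Your proof is correct and follows essentially the same route as the paper: the lower bound $d_b(\mathcal{C}_i)\geq i+b$ via Theorem~\ref{mainthm} and Corollary~\ref{lemm}, and the matching upper bound from $\omega_b((x-1)^i)=i+b$. The paper simply asserts this weight computation, whereas you justify it carefully via the binomial expansion and Corollary~\ref{maincoro}; that added detail (including the wraparound check) is a welcome refinement but not a different argument.
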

\begin{proof}
By Theorem \ref{mainthm}, $d_H(\mathcal{C}_i)=i+1$ for $0\leq i \leq p-1$. Assume that $0\leq i \leq p-b$.
Hence, by Corollary \ref{lemm}, $d_b(\mathcal{C}_i)\geq i+1+b-1=i+b$. Moreover
$\omega_b((x-1)^i)=i+1+(b-1)=i+b$. Then $d_b(\mathcal{C}_i)=i+b$. 
\end{proof}
 
%\begin{theorem}\label{m2}
%Let $e\geq2$ and $b+1\leq p^e$. Then $d_b(\mathcal{C}_1)=b+1$.
%\end{theorem}
%\begin{proof}
%Since $b-1\leq p^e-2$. Then, by Theorem \ref{main1}, $\omega_b(x-1)=2+(b-1)=b+1$.
%On the other hand, by Theorem \ref{mainthm}, $ d_H(\mathcal{C}_1)=2$. Therefore $d_b(\mathcal{C}_1)\geq d_H(\mathcal{C}_1)+b-1=b+1$. Consequently $d_b(\mathcal{C}_1)=b+1$.
%\end{proof}

%\begin{theorem}\label{m3}
%Let $e\geq2$ and $b+2\leq p^e$.
%Then
%$d_b(\mathcal{C}_2)=b+2$.
%\end{theorem}
%\begin{proof}
%By Proposition \ref{prop1}, $d_b(\mathcal{C}_2)\geq b+1$.
%If $\omega_b(c(x))=b+1$, then $c(x)=x^i(a+bx)$ where $a,b\in\mathbb{F}_{p^m}\backslash\{0\}$ and $i$ is an integer.
%But $c(x)\notin\mathcal{C}_2$. So $d_b(\mathcal{C}_2)= b+2$.
%\end{proof}

 \begin{theorem}\label{m3}
Let $e\geq2$ and $1\leq i\leq p^{e-1}$ such that $i+b\leq p^e$ and $i\leq b$. 
Then $d_b(\mathcal{C}_i)=i+b$.
\end{theorem}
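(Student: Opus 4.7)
The plan is to prove $d_b(\mathcal{C}_i)\leq i+b$ by exhibiting the generator $(x-1)^i$ as a codeword of $b$-weight exactly $i+b$, and to prove the matching lower bound $d_b(\mathcal{C}_i)\geq i+b$ by a case analysis on the partition structure from Corollary~\ref{maincoro}, buttressed by a structural lemma that controls the cyclic ``spread'' of nonzero codewords of $\mathcal{C}_i$.

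For the upper bound I would compute $\omega_b((x-1)^i)$ directly. Expanding $(x-1)^i=\sum_{j=0}^{i}\binom{i}{j}(-1)^{i-j}x^j$, the support $S$ lies in $\{0,1,\ldots,i\}$ and always contains the endpoints $0$ and $i$, since $\binom{i}{0}=\binom{i}{i}=1$. A $b$-tuple starting at position $j$ is nonzero precisely when $j\in\bigcup_{k\in S}\{k-b+1,\ldots,k\}\pmod{p^e}$. Under $i\leq b$ the two contributions from $k=0$ and $k=i$ alone already cover $\{-b+1,\ldots,i\}$, which also equals the full union $\bigcup_{k=0}^{i}\{k-b+1,\ldots,k\}$. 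The hypothesis $i+b\leq p^e$ guarantees these $i+b$ indices are distinct modulo $p^e$, so $\omega_b((x-1)^i)=i+b$.

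For the lower bound, fix a nonzero $c\in\mathcal{C}_i$ and split into three cases. If no zero-run of length $\geq b-1$ exists, then every $b$-window meets the support of $c$, so every $b$-tuple is nonzero and $\omega_b(c)=p^e\geq i+b$ by hypothesis. If some long zero-run exists and $L\geq 2$, each block $B_l$ must contain at least one nonzero position (otherwise the zeros of $B_l$ would merge with the long zero-runs flanking it into a single long zero-run containing $B_l$, contradicting $B_l\subseteq A$), so Corollary~\ref{maincoro} yields $\omega_b(c)=\sum_{l=1}^{L}(|B_l|+b-1)\geq Lb\geq 2b\geq i+b$, using $i\leq b$. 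The remaining subcase is $L=1$ with $B_1$ a proper cyclic interval, where $\omega_b(c)=|B_1|+b-1$; here I would appeal to the structural lemma below, observing $\mathrm{supp}(c)\subseteq B_1$, to conclude $|B_1|\geq i+1$ and hence $\omega_b(c)\geq i+b$.

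The main obstacle is the structural lemma: \emph{the support of any nonzero codeword of $\mathcal{C}_i$ cannot be contained in any cyclic interval of length at most $i$}. To prove it, suppose $\mathrm{supp}(c)\subseteq\{s,s+1,\ldots,s+k-1\}\pmod{p^e}$ for some $k\leq i$. Then $c=x^s\tilde h(x)$ in $R=\mathbb{F}_{p^m}[x]/\langle x^{p^e}-1\rangle$, where $\tilde h$ is a genuine polynomial of $x$-degree $\leq k-1\leq i-1$. Since $x^s$ is a unit in $R$ and $\mathcal{C}_i$ is an ideal, $\tilde h\in\mathcal{C}_i$. Passing to the local ring $\mathbb{F}_{p^m}[y]/\langle y^{p^e}\rangle$ via $y=x-1$ identifies $\mathcal{C}_i$ with $\langle y^i\rangle$; but $\tilde h(y+1)$ is still a polynomial in $y$ of degree $\leq i-1$, so the divisibility $y^i\mid\tilde h(y+1)$ forces $\tilde h=0$ and hence $c=0$, the desired contradiction. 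Combining this with the case analysis above completes the proof.
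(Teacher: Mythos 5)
Your proof is correct and follows essentially the same route as the paper's: both establish the upper bound by computing $\omega_b((x-1)^i)=i+b$ for the generator, and both obtain the lower bound by showing that a nonzero codeword of smaller $b$-weight would have to be supported on a cyclic interval of length at most $i$ and hence could not lie in $\mathcal{C}_i=\langle(x-1)^i\rangle$. The only difference is one of completeness: the paper merely asserts that a word of the form $x^t(a_0+a_1x+\cdots+a_jx^j)$ with $j\leq i-1$ is not in $\mathcal{C}_i$, whereas you actually prove this via the substitution $y=x-1$, and you also make explicit the case analysis ($L=1$ versus $L\geq 2$ versus no long zero-run) that the paper leaves implicit in its claim that small $b$-weight forces single-interval support.
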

\begin{proof}
Since $i+b\leq p^e$, then by Corollary \ref{maincoro}, $\omega_b((x-1)^i)=i+1+(b-1)=i+b$.
So, $d_b(\mathcal{C}_i)\leq i+b$.
By Proposition \ref{prop1}, $d_b(\mathcal{C}_i)\geq b+1$.
Let $c(x)$ be a polynomial in $\mathbb{F}_{p^m}[x]$.
If $\omega_b(c(x))=j+b$ for some $1\leq j\leq i-1$, then $i\leq b$ implies
that $c(x)=x^t(a_0+a_1x+\dots+a_jx^j)$ where $a_l$'s are in $\mathbb{F}_{p^m}$,~ $a_0,a_j\neq0$ and $t$ is a non-negative integer.
However $c(x)\notin\mathcal{C}_i$. So $d_b(\mathcal{C}_i)= i+b$.
\end{proof}

\begin{lemma}\label{basiclemma3}
Let $e$ and $k$ be two integers such that $e\geq2$ and $1\leq k\leq e-1$. 
Suppose that $c(x)=(x-1)^{p^e-p^{e-k}}g(x)$ where $g(x)$ is a nonzero polynomial in $\mathbb{F}_{p^m}[x]$ 
with $d:=deg(g(x))< p^{e-k}$ and  $b\leq p^e-d$.
Then 
\begin{enumerate}
\item If $d\leq p^{e-k}-b$ or $g_k= 0$ for every $0\leq k\leq b-p^{e-k}+d-1$, then $\omega_b(c(x))=p^k\omega_b(g(x))$.
\item If $d> p^{e-k}-b$ and $g_k\neq 0$ for some $0\leq k\leq b-p^{e-k}+d-1$, then $\omega_b(c(x))=p^k\big(\omega_b(g(x))-(b-1)+\zeta\big)$
where $\zeta=p^{e-k}-d-1$.
\end{enumerate}
\end{lemma}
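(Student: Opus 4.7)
The plan is to exploit the block structure of $c(x)$ via a characteristic-$p$ identity and then count nonzero $b$-tuples directly using the periodicity of the coefficient vector. In $\mathbb{F}_{p^m}[x]$ the Frobenius identity $(x-1)^{p^{e-k}}=x^{p^{e-k}}-1$ together with the congruence $\binom{p^k-1}{j}\equiv(-1)^j\pmod p$ combine to give
\[
(x-1)^{p^e-p^{e-k}}=(x^{p^{e-k}}-1)^{p^k-1}=\sum_{j=0}^{p^k-1}x^{jp^{e-k}},
\]
so that $c(x)=\sum_{j=0}^{p^k-1}x^{jp^{e-k}}g(x)$. Because $\deg g=d<p^{e-k}$, the $p^k$ shifted copies have pairwise disjoint supports, and as a length-$p^e$ cyclic vector $c$ is the periodic arrangement of $p^k$ copies of $g$ at positions $0,p^{e-k},\dots,(p^k-1)p^{e-k}$, each of width $d+1$ and separated by $p^{e-k}-d-1$ zeros; the pattern has period $p^{e-k}$.

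By translation invariance this periodicity gives $\omega_b(c)=p^kN$, where $N$ counts the $s\in[0,p^{e-k}-1]$ for which the $b$-tuple $(c_s,c_{s+1},\dots,c_{s+b-1})$ (indices mod $p^e$) is nonzero. I would split the $s$-range into the \emph{forward} range $[0,p^{e-k}-b]$, for which the $b$-tuple lies inside a single copy of $g$, and the \emph{straddling} range $[p^{e-k}-b+1,p^{e-k}-1]$, for which the $b$-tuple reaches into the next copy of $g$. In case~(1), either the trailing gap alone satisfies $p^{e-k}-d-1\ge b-1$, or the hypothesis $g_0=\dots=g_{b-p^{e-k}+d-1}=0$ guarantees that the combined inter-block zero run (trailing zeros of one copy plus leading zeros of the next) has length $\ge b-1$. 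In either sub-case every straddling $b$-tuple is zero, so a direct comparison with the analogous forward/wrap-around decomposition of $g$ on length $p^e$ yields $N=\omega_b(g)$ and hence $\omega_b(c)=p^k\omega_b(g)$.

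In case~(2) the inter-block gap is strictly less than $b-1$, so straddling $b$-tuples can be nonzero. Since $g_d\neq 0$, every straddling tuple with $s\in[p^{e-k}-b+1,d]$ is automatically nonzero (contributing $d+b-p^{e-k}$ tuples); the hypothesis that some $g_i\neq 0$ with $i\le b-p^{e-k}+d-1$ then forces every remaining straddling tuple (those with $s\in[d+1,p^{e-k}-1]$) to be nonzero as well (contributing the further $\zeta=p^{e-k}-d-1$ tuples). Adding these gives $b-1$ nonzero straddling tuples per period, and comparing with $\omega_b(g)$---whose forward part exceeds the forward count of $c$ by exactly the $d+b-p^{e-k}$ tuples at $t\in[p^{e-k}-b+1,d]$ and whose wrap part contributes $b-1$ tuples---produces $N=\omega_b(g)-(d+b-p^{e-k})=\omega_b(g)-(b-1)+\zeta$, as required. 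The main difficulty is the bookkeeping in case~(2): identifying precisely which coordinates of $g$ enter each straddling tuple and verifying that both regimes $s\le d$ (handled by $g_d\neq 0$) and $s>d$ (handled by the early nonzero $g_i$) are jointly covered by the hypotheses. Once this matching is made precise, both formulas drop out of the comparison between $N$ and $\omega_b(g)$.
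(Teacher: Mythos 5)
Your overall strategy is the same as the paper's: use $(x-1)^{p^e-p^{e-k}}=\sum_{j=0}^{p^k-1}x^{jp^{e-k}}$ to view $c$ as $p^k$ concatenated copies of the length-$p^{e-k}$ block $\widehat{g}=(g_0,\dots,g_d,0,\dots,0)$, deduce $\omega_b(c)=p^kN$ with $N=\omega_b(\widehat{g})$, and then compare $N$ with $\omega_b(g)$. The paper dismisses that last comparison with ``we can check,'' and it is exactly there that your argument breaks. In case (1) the assertion that ``every straddling $b$-tuple is zero'' is false: a straddling tuple starting at $s\in[d+1,p^{e-k}-1]$ reaches positions $0,\dots,s+b-1-p^{e-k}$ of the next copy and is nonzero as soon as $g$ has a nonzero coefficient among them. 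Taken literally, your reading reduces $N$ to the forward count alone, which undercounts by the number of nonzero wrap-around tuples of $g$. What actually forces $N=\omega_b(g)$ in case (1) is a bijection between the nonzero straddling tuples of $\widehat{g}$ that start in the zero gap and the nonzero wrap-around tuples of $g$: writing $\mu$ for the least index with $g_\mu\neq 0$, both families are governed by the condition that the tuple reach at least $\mu+1$ positions into the leading coefficients, and the case (1) hypotheses guarantee the two counts agree.

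In case (2) the decisive error is the unsupported claim that the wrap part of $\pi_b(g)$ ``contributes $b-1$ tuples'': the wrap-around tuple of $g$ starting at position $p^e-b+1+u$ (for $0\le u\le b-2$) is nonzero iff $\mu\le u$, so the wrap part contributes $b-1-\mu$ tuples, and your count is off by $\mu$ whenever $g_0=0$. This is not merely a slip in your bookkeeping; part (2) of the lemma is false as stated. Take $p=2$, $e=3$, $k=1$, $b=3$, $g(x)=x+x^3$: then $d=3$, $\zeta=0$, the case (2) hypotheses hold (since $g_1\neq0$), and $\omega_3(g)=5$, so the formula predicts $\omega_3(c)=2(5-2)=6$, whereas $c=(0,1,0,1,0,1,0,1)$ has $\omega_3(c)=8$; the corrected formula $p^k\bigl(\omega_b(g)-(b-1-\mu)+\zeta\bigr)$ does give $8$. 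In the paper's application one may reduce to $g_0\neq0$ by factoring out the unit $x$, so the subsequent theorem is salvageable, but a proof of the lemma must either add the hypothesis $g_0\neq 0$ or carry $\mu$ through the computation; your write-up, like the paper's, does neither.
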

\begin{proof}
Assume that $g(x)=\sum_{j=0}^dg_jx^j$. Thus 
$$c(x)=\sum_{i=0}^{p^k-1}x^{ip^{e-k}}g(x)=\sum_{i=0}^{p^k-1}\sum_{j=0}^{d}g_jx^{ip^{e-k}+j}.$$
As usual, we identify  the polynomial $h(x)=h_0+h_1x+\dots+h_nx^n$ with the vector $h=(h_0,h_1,\dots,h_n)$. 
Therefore, we have
$c=(\overbrace{\widehat{g},\dots,\widehat{g}}^{p^k-\mbox{time}})$
where 
$$\widehat{g}=(g_0,\dots,g_d,\overbrace{0,\dots,0}^{(p^{e-k}-d-1)-\mbox{time}}).$$
 We denote $\omega_b(\widehat{g}(x)):=\omega_b(\widehat{g})$. Since
$\pi_b(c)=[\overbrace{\pi_b(\widehat{g}),\dots,\pi_b(\widehat{g})}^{p^k-\mbox{time}}],$
then $\omega_b(c(x))=p^k\omega_b(\widehat{g}(x))$.
On the other hand, $\omega_b(g(x))=\omega_b(g)$, where 
$$g=(g_0,g_1,\dots, g_d,\overbrace{0,\dots,0}^{(p^e-d-1)-\mbox{time}}).$$
We can check that:\\
(1) If $d\leq p^{e-k}-b$ or $g_k= 0$ for every $0\leq k\leq b-p^{e-k}+d-1$, then $\omega_b(g)=\omega_b(\widehat{g})$, i.e., $\omega_b(g(x))=\omega_b(\widehat{g}(x))$. Hence $\omega_b(c(x))=p^k\omega_b(g(x))$.\\
(2) If $d> p^{e-k}-b$ and $g_k\neq 0$ for some $0\leq k\leq b-p^{e-k}+d-1$, then 
$\omega_b(g)=\omega_b(\widehat{g})+(b-1)-\zeta$ where
$\zeta=p^{e-k}-d-1$, i.e., $\omega_b(g(x))=\omega_b(\widehat{g}(x))+(b-1)-\zeta$. So, $\omega_b(c(x))=p^k\big(\omega_b(g(x))-(b-1)+\zeta\big)$.
\end{proof}

\begin{theorem}
Let $e$ and $k$ be two integers such that $e\geq2$ and $1\leq k\leq e-1$. 
If $0\leq i\leq p^{e-k-1}$ such that $b+i\leq p^{e-k}$ and $i\leq b$,
then $d_b(\mathcal{C}_{p^e-p^{e-k}+i})=p^k(b+i)$.
% If $b+i\leq p^{e-k}$, then $d_b(\mathcal{C}_{p^e-p^{e-k}+i})=p^k(b+i)$ for $0\leq i\leq2$.
\end{theorem}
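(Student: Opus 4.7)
The plan is to prove the equality $d_b(\mathcal{C}_{p^e-p^{e-k}+i})=p^k(b+i)$ by establishing the two bounds separately, reducing both to the already-computed $b$-distance of the shorter code $\langle (x-1)^i\rangle\subseteq \mathbb{F}_{p^m}[x]/\langle x^{p^{e-k}}-1\rangle$.

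For the upper bound I would evaluate the $b$-weight of the natural generator $c_0(x)=(x-1)^{p^e-p^{e-k}+i}$. Factoring $c_0=(x-1)^{p^e-p^{e-k}}g_0$ with $g_0(x)=(x-1)^i$ of degree $d=i$, the hypothesis $i+b\leq p^{e-k}$ gives $d\leq p^{e-k}-b$, placing us in Case~(1) of Lemma~\ref{basiclemma3} and yielding $\omega_b(c_0)=p^k\omega_b(g_0)$. A direct use of Corollary~\ref{maincoro} (exactly as in the proof of Theorem~\ref{m3}: the support of $(x-1)^i$ lies inside $\{0,1,\dots,i\}$ with both endpoints active, and the condition $i\leq b$ forces the resulting set $A$ to be a single block of length at most $b+1$) gives $\omega_b((x-1)^i)=i+b$. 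Therefore $d_b(\mathcal{C}_{p^e-p^{e-k}+i})\leq p^k(i+b)$.

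For the lower bound let $c$ be any nonzero codeword. I would start from $c(x)=(x-1)^{p^e-p^{e-k}+i}h(x)$ and pass to a canonical representation $c(x)=(x-1)^{p^e-p^{e-k}}g(x)$ in which $g$ is nonzero, has degree strictly less than $p^{e-k}$, and is still divisible by $(x-1)^i$. This is achieved by replacing the naive $(x-1)^ih(x)$ by its remainder modulo $(x-1)^{p^{e-k}}$; the replacement leaves $c$ unchanged because $(x-1)^{p^e-p^{e-k}}\cdot (x-1)^{p^{e-k}}=(x-1)^{p^e}=0$ in the ambient quotient ring, and preserves divisibility by $(x-1)^i$ since $i\leq b\leq p^{e-k}$. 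The identification from the proof of Lemma~\ref{basiclemma3}, $c=(\widehat{g},\widehat{g},\dots,\widehat{g})$ with $p^k$ copies of the length-$p^{e-k}$ block $\widehat{g}$, then delivers $\omega_b(c)=p^k\omega_b(\widehat{g})$ where $\widehat{g}$ is read cyclically. Since $\widehat{g}$ is a nonzero codeword of the length-$p^{e-k}$ cyclic code $\langle (x-1)^i\rangle$, its $b$-weight is bounded below by the $b$-distance of that shorter code. The hypotheses on $i$, $b$, $k$, and $e$ translate directly into the hypotheses of either Theorem~\ref{m3} (when $e-k\geq 2$ and $i\geq 1$), Proposition~\ref{m1} (when $e-k=1$), or Proposition~\ref{firstthm} (when $i=0$), each of which yields $d_b\geq i+b$ for that shorter code. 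Hence $\omega_b(c)\geq p^k(i+b)$, and combining with the upper bound gives the desired equality.

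The main obstacle is the clean preparation of $g$ in the lower bound: one has to commit to a representative of degree less than $p^{e-k}$ before Lemma~\ref{basiclemma3} can be invoked, and one must verify that this reduction respects both the equation $c=(x-1)^{p^e-p^{e-k}}g$ in the cyclic code ring and the divisibility $(x-1)^i\mid g$. Once this setup is in place, the argument cascades onto the already-proved base cases without further difficulty.
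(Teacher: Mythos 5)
Your proof is correct, but your lower bound takes a genuinely different route from the paper's. The upper bound is the same in both: apply part (1) of Lemma \ref{basiclemma3} to the generator with $g(x)=(x-1)^i$, $d=i\leq p^{e-k}-b$, and use $\omega_b((x-1)^i)=i+b$. For the lower bound, the paper also reduces a general codeword to $c(x)=(x-1)^{p^e-p^{e-k}}g(x)$ with $\deg g<p^{e-k}$, but then invokes Lemma \ref{basiclemma3} as stated, i.e.\ in terms of the $b$-weight of $g$ padded to length $p^e$; this forces a two-case analysis, and Case 2 needs an ad hoc count of the zero coefficients of $g$ (the claim $z\geq b-j-\zeta$) to offset the correction term $-(b-1)+\zeta$. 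You instead stop at the intermediate identity inside the lemma's proof, $\omega_b(c)=p^k\omega_b(\widehat{g})$ with $\widehat{g}$ read cyclically at length $p^{e-k}$, and observe that $\widehat{g}$ is a nonzero codeword of the length-$p^{e-k}$ cyclic code $\langle(x-1)^i\rangle$, whose minimum $b$-distance is already known to be $i+b$ by Theorem \ref{m3} (when $e-k\geq 2$, $i\geq 1$), Proposition \ref{m1} (when $e-k=1$), or Proposition \ref{firstthm} (when $i=0$); the hypotheses transfer exactly, and there is no circularity since those results are proved independently. This "descend to length $p^{e-k}$" reduction eliminates the paper's case split and counting argument entirely, at the modest cost of the preparatory step you flag (reducing $(x-1)^ih$ modulo $(x-1)^{p^{e-k}}$ while preserving both the codeword and divisibility by $(x-1)^i$, which works because $(x-1)^{p^e}=x^{p^e}-1=0$ in the ambient ring) and of checking which base result applies in each parameter regime. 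One small caveat: your parenthetical justification that the set $A$ of Corollary \ref{maincoro} for $(x-1)^i$ is "a single block" is not always accurate (e.g.\ when $i=b$ is a power of $p$, the word $x^i-1$ splits $A$ into two blocks, though the formula still yields $i+b$); since you defer to the paper's Theorem \ref{m3} for the value $\omega_b((x-1)^i)=i+b$, this does not affect the validity of your argument.
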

\begin{proof}
Fix $0\leq i\leq p^{e-k-1}$ such that $b+i\leq p^{e-k}$ and $i\leq b$. Let $0\neq c(x)\in\mathcal{C}_{p^e-p^{e-k}+i}$. Then, there exists $0\neq f(x)\in\mathbb{F}_{p^m}[x]$ such that $c(x)=(x-1)^{p^e-p^{e-k}}(x-1)^if(x)$. 
Set $g(x):=(x-1)^if(x)$ and $d:=deg(g(x))$.
Without loss of the generality we may assume that $d< p^{e-k}$.
Notice that by Theorem \ref{mainthm}, $\omega_H(g(x))\geq 2$, and by Theorem \ref{m3}, $\omega_b(g(x))\geq b+i$.
Regarding Lemma \ref{basiclemma3}, we consider the following cases:\\
{\bf Case 1.} If $d\leq p^{e-k}-b$ or $g_k= 0$ for every $0\leq k\leq b-p^{e-k}+d-1$, then 
$\omega_b(c(x))=p^k\omega_b(g(x))\geq p^k(b+i)$.\\
{\bf Case 2.} If $d> p^{e-k}-b$ and $g_k\neq 0$ for some $0\leq k\leq b-p^{e-k}+d-1$, then $\omega_b(c(x))=p^k\big(\omega_b(g(x))-(b-1)+\zeta\big)$
where $\zeta=p^{e-k}-d-1$.
If $\omega_H(g(x))\geq b+i$, then Corollary \ref{lemm} implies that $\omega_b(g(x))\geq b+i+b-1$. 
 Hence $\omega_b(c(x))\geq p^k\big(b+i+(b-1)-(b-1)\big)=p^k(b+i)$. 
Assume that $\omega_H(g(x))=i+j$ for some  $2-i\leq j\leq b-1$.
 It is easy to see that $\omega_H(g(x))+z=d+1$ where $z=|\{l\mid 0\leq l\leq d \mbox{ and } g_l=0\}|$.
 We claim that, $z\geq b-j-\zeta$. Otherwise $d+1<\omega_H(g(x))+b-j-\zeta=i+j+b-j-(p^{e-k}-d-1)=i+b-p^{e-k}+d+1$.
 But $b+i\leq p^{e-k}$ leads us to a contradiction. Therefore the claim holds. So, 
 $\omega_b(g(x))\geq i+j+b-j-\zeta+(b-1)$.
Thus $\omega_b(c(x))\geq p^k\big(\omega_b(g(x))-(b-1)+\zeta\big)=p^k(i+b)$.
Hence $d_b(\mathcal{C}_{p^e-p^{e-k}+i})\geq p^k(i+b)$. Moreover, by part (1) of Lemma \ref{basiclemma3},
$\omega_b((x-1)^{p^e-p^{e-k}+i})=p^k\omega_b((x-1)^i)=p^k(b+i)$. Consequently, $d_b(\mathcal{C}_{p^e-p^{e-k}+i})=p^k(b+i)$. 
\end{proof}

\vspace{5mm} \noindent \footnotesize 
\begin{minipage}[b]{10cm}
Hojjat Mostafanasab \\
Department of Mathematics and Applications, \\ 
University of Mohaghegh Ardabili, \\ 
P. O. Box 179, Ardabil, Iran. \\
Email: h.mostafanasab@gmail.com, \hspace{1mm} h.mostafanasab@uma.ac.ir
\end{minipage}\\

\vspace{5mm} \noindent \footnotesize 
\begin{minipage}[b]{10cm}
Esra Sengelen Sevim \\
Eski Silahtara\v{g}a Elektrik Santrali, Kazim Karabekir,\\
Istanbul Bilgi University,\\
Cad. No: 2/1334060, Ey\"{u}p Istanbul, Turkey.\\ 
Email: esra.sengelen@bilgi.edu.tr
\end{minipage}\\

\end{document}